\RequirePackage{fix-cm}
\documentclass[twocolumn]{svjour3}         

\smartqed  

\usepackage{graphicx,amsmath,url, amsfonts,bm}      % include this line if your document contains figures
\usepackage{amssymb}
\usepackage{color}

\usepackage{algorithm}
\usepackage{algpseudocode}

\usepackage[caption=false]{subfig}%

\usepackage[sort&compress]{natbib}

\newcommand{\sgn}{\operatorname{sgn}}

\begin{document}

\title{Multi-Observer Output Feedback Stabilization of a Class of Uncertain Nonminimum-Phase Systems}

\author{Roberto Santos \and Kurios Iuri Pinheiro de Melo Queiroz* \and Samaherni Morais Dias \and Tiago Roux Oliveira} 

\institute{Kurios Iuri Pinheiro de Melo Queiroz, Samaherni Morais Dias \at
Federal University of Rio Grande do Norte \\
Natal-RN, Brazil\\
Tiago Roux Oliveira and Roberto Santos  \\
State University of Rio de Janeiro \\
Rio de Janeiro-RJ, Brazil \\
*Corresponding author\\
\email{kurios.queiroz@ufrn.br}  \\
}

\date{Received: date / Accepted: date}

\maketitle

\begin{abstract} 
This paper addresses the challenging problem of output feedback stabilization for nonlinear nonminimum phase (NMP) systems in the presence of parametric uncertainties and external disturbances. The proposed framework integrates three distinct observers: a reduced-order observer for reconstructing the unmeasured states of the internal (zero) dynamics, a high-gain observer for estimating output derivatives, and an observer for estimating the aggregated effect of parametric uncertainties and disturbances. Leveraging these estimates, a sliding mode control law is synthesized to ensure global asymptotic stability of the entire system using only output measurements. The control design requires only partial model knowledge, significantly relaxing the restrictive assumptions common in existing literature. Numerical simulations illustrate the effectiveness of the proposed output-feedback strategy and corroborate the theoretical developments.

\keywords{Nonminimum phase systems \and output feedback \and sliding mode control \and multi-observer design\and uncertainty estimation\and global stabilization}
\end{abstract}

%===============================================================================
%===============================================================================
%===============================================================================
\section{Introduction}

The challenge of achieving output tracking and stabilization for nonminimum phase (NMP) systems remains a significant topic in control theory, as evidenced by extensive literature, including the historical review in \citep{I:2013} and \citep{fase}. For nonlinear systems, the NMP property is formally defined through the concept of zero dynamics, which generalizes the notion of unstable transmission zeros found in linear system transfer functions \citep{I:2013}. The core difficulty in controlling such systems lies in designing a controller that meets the tracking objectives while simultaneously ensuring the stabilization of the zero dynamics, thereby preventing the divergence of internal states.

%A variety of methodologies have been explored to address this challenge. Proposed solutions include redefining the system output to induce stable zero dynamics \citep{GH:1993, MT:2005}, often leveraging the paradigm of parallel feedforward compensation \citep{KBS:1994}. Other approaches involve the system center method for generating reference trajectories that stabilize the internal dynamics \citep{SS:2001, BSES:2008}, imposing gain constraints for dissipative systems \citep{PIM:2009}, employing high-gain observers for estimating output derivatives \citep{NK:2011}, and utilizing virtual reference feedback tuning \citep{SCB:2018}.

Existing approaches to nonlinear nonminimum-phase control may be broadly classified into three categories. The first consists of output redefinition and feedforward compensation methods \citep{KBS:1994}, which modify the regulated output to obtain stable zero dynamics \citep{GH:1993, MT:2005}. A second class relies on reference-generation techniques, such as system-center methods \citep{PIM:2009}, that produce trajectories compatible with the unstable internal dynamics  \citep{SS:2001, BSES:2008}. A third class employs observer-based feedback designs \citep{NK:2011}, often requiring either partial state measurements or accurate knowledge of the plant dynamics \citep{SCB:2018}.

Despite these significant advances, output-feedback stabilization of uncertain nonminimum-phase systems remains challenging because the controller must simultaneously reconstruct the internal dynamics, estimate uncertain external dynamics, and stabilize both subsystems using only output measurements \citep{RSH:2017, YL:2017, RCF:2017, CK:2017, YDLZ:2018}. Conversely, numerous output-feedback control solutions exist for minimum-phase systems, as seen in \citep{of1, of2, of3, of4, of5}.%On the other hand, there are many solutions for minimum systems using output-feedback control such as \citep{of1,of2,of3,of4,of5}.

%A common limitation of these strategies is their reliance on restrictive assumptions, such as the availability of full or partial state feedback, or a complete/partial knowledge of the system model. Consequently, to handle broader uncertainty, a frequent trade-off is to impose a minimum phase condition \citep{RSH:2017, YL:2017, RCF:2017, CK:2017, YDLZ:2018}. To the authors' knowledge, the problem of overcoming the NMP restriction under significant uncertainty remains largely open. This paper demonstrates that output feedback regulation of NMP systems is achievable with only partial model knowledge through the synergistic use of multiple observers, which constitutes our primary contribution. The presence of parametric uncertainties alongside the NMP property considerably intensifies the difficulty of controller synthesis.

%To this end, we propose a novel adaptive architecture (Fig. \ref{fig: diagrama de blocos}) comprising three distinct observers:

Rather than addressing these challenges through a single observer or assuming additional model information, this paper adopts a modular estimation strategy in which each source of uncertainty is handled by a dedicated observer (Fig. 1).

%\begin{enumerate}
%\item A reduced-order observer dedicated to reconstructing the unmeasured states of the internal dynamics.
%\item An unknown-input observer designed to estimate parametric uncertainties and external disturbances.
%\item A high-gain observer that provides estimates of the output derivatives, which are necessary for implementing the aforementioned observers.
%\end{enumerate}

%Leveraging this multi-observer scheme, we synthesize a sliding mode control law, a well-established robust technique with diverse applications \citep{Balaji02072024,Zhao03072023,Gupta09042024,Gouda01082024}, that guarantees global asymptotic stability for the NMP plant using only output measurements. The choice of sliding modes is motivated by their inherent robustness, which is exploited to counteract parametric uncertainties within the external dynamics, a part of the system often presumed known in existing literature. The efficacy of the proposed control strategy is validated through numerical simulations on an academic example.

The main contributions of this work are summarized as follows:
\begin{itemize}
\item an output-feedback stabilization framework for a class of uncertain nonlinear nonminimum-phase systems;

\item a modular observer architecture combining reduced-order state reconstruction, high-gain output estimation and uncertainty estimation;

\item a Lyapunov-based stability analysis establishing global asymptotic stabilization using only output measurements under mild assumptions;

\item numerical simulations illustrating the effectiveness of the proposed methodology.
\end{itemize}

\begin{figure}[ht]
\centering
%\vspace{-1.5cm}
%\hspace{-1cm}
\includegraphics[width=\columnwidth]{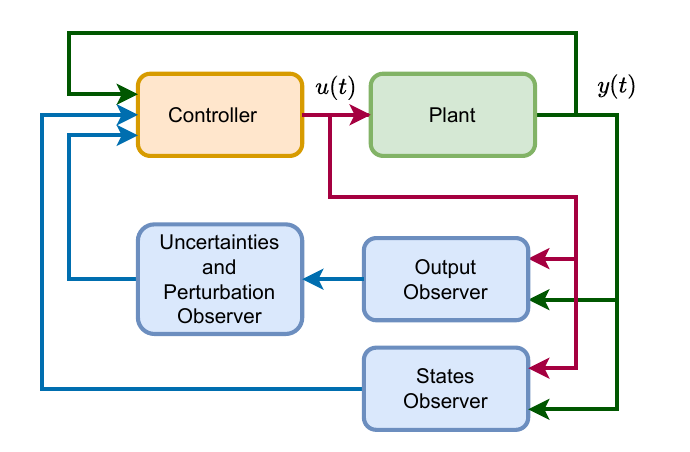}
\caption{Block diagram representing the output feedback with multiple observer scheme.}
\label{fig: diagrama de blocos}
\end{figure}

\subsection{Notations and Definitions}
In this paper, the Euclidean norm of a vector $x$ and the corresponding induced norm of a matrix $A$ are denoted by $\left\| x \right\|$ and $\left\| A \right\|$, respectively.

\section{Problem Statement} \label{sec: control objective}
%Consider the system with relative degree $n^{*}=1$,
Consider a nonlinear system that, after transformation into normal form, has relative degree one,
\begin{eqnarray}
\dot{x} & = & Ax+B [u+d(t)]\label{eq: x_ponto}\\
y       & = & Cx,\label{eq: y}
\end{eqnarray}
\noindent where $A \in $ $\mathbb R^{n\times n}$, $B \in $ $\mathbb R^n$, $C \in $ $\mathbb R^{1\times n}$ are matrices, $u \in $ $\mathbb R$ is the control input, $d(t)$ is an unmeasured disturbance of the plant, $x \in$ $\mathbb R^n$ is the state vector and $y \in$ $\mathbb R$ is the measured output of system (\ref{eq: x_ponto})--(\ref{eq: y}). %The matrices $A \in $ $\mathbb R^{n\times n}$, $B \in $ $\mathbb R^n$, $C \in $ $\mathbb R^{1\times n}$ are known.

This system can be rewritten in normal form \citep{K:2002} by applying a transformation $T$, described by 
\begin{equation}
\left [
\begin{array}{c}
\eta\\
y\\
\end{array}
\right ] = Tx,
\label{eq:Transformada}
\end{equation}

\noindent with $\eta \in \mathbb R^{n-1}$. Thus, the system (\ref{eq: x_ponto})--(\ref{eq: y}) can be represented by
\begin{eqnarray}
\dot{\eta} & = & A_{0}\eta + B_{0}y \label{eq: eta_ponto}\\
\dot{y} & = & A_{1}\eta + B_{1}y + k_{p}[u+d(t)], \label{eq: y_ponto}
\end{eqnarray}

\noindent where $A_{0} \in $ $\mathbb R^{(n-1) \times (n-1)}$, $A_{1} \in $ $\mathbb R^{1 \times (n-1)}$, $B_{0} \in $ $\mathbb R^{n-1}$ and $B_{1} \in $ $\mathbb R$ are matrices. The parameter $k_{p}$ represents the high frequency gain of the plant, and is also known.
%\footnote{In fact, throughout this work, only the condition of partial knowledge of the parameters of the plant is considered. However, to better present the control strategies developed in this paper, the total knowledge of plant parameters was first considered.}.

The representation in the normal form decomposes the system through an internal dynamics ($\eta$) and an external dynamics ($y$). Let $y = 0$ in
(\ref{eq: eta_ponto}), it follows that

\begin{equation}
\dot{\eta} = A_{0}\eta,
\label{eq: dinamica dos zeros}
\end{equation}

\noindent where this equation is called \textit{zero dynamics} and the eigenvalues of matrix $A_{0}$ are the zeros of the transfer matrix of system (\ref{eq: x_ponto})--(\ref{eq: y}).

Below, follow the Assumptions \textbf{(A1)} and \textbf{(A2)}, relating to the system (\ref{eq: eta_ponto})--(\ref{eq: y_ponto}):

\begin{description}
	\item[\textbf{(A1)}] \textit{On the nonminimum phase}: The matrix $A_{0}$ in (\ref{eq: eta_ponto}) is not Hurwitz.
	\item[\textbf{(A2)}] \textit{On the disturbance}: The disturbance $d(t)$ is uniformly bounded.
\end{description}

%\subsection{Control Objective} \label{sec: control objective}

The objective in this paper is to find an output-feedback control law $u$ so that, from any initial condition, the plant is stabilized, i.e.:
\begin{equation}
T x(t) =
\left [
\begin{array}{c}
\eta (t)\\
y(t)\\
\end{array}
\right ] \to 0,
\label{eq: objetivo}
\end{equation}

\noindent as $t \rightarrow \infty$.

%%%%%%%%%%%%%%%%%%%%%%%%%%%%%%%%%%%%%%%%%%%%%%%%%%%%%%%%%

\section{Control Law Design}\label{sec: projeto de u}

To achieve the objective outlined in the previous section, a sliding variable $s$ is first designed as

\begin{equation}
s := y - K \eta,
\label{eq: s}
\end{equation}

\noindent with $K \in \mathbb R^{1\times n-1}$, following the approach in \citep{GMF:2012}. Its first time derivative is then given by

\begin{equation}
\dot{s} = A_{1} \eta + B_{1} y + k_{p} [u + d(t)] - K A_{0} \eta - K B_{0} y.
\label{eq: s_ponto}
\end{equation}

\noindent If the control law based in sliding modes $u$ is designed as in \citep{UGS:1999}
\begin{equation}
u = - \sgn(k_{p}) \rho \sgn(s),
\label{eq: controle}
\end{equation}

\noindent where $\rho$ is a modulation function defined as
\begin{eqnarray}
\rho &\geq & \frac{1}{k_{min}}  [\left\|A_{1}\right\| \left\| \eta \right\| + \left\|B_{1}\right\| \left| y \right| + \left\|KA_{0}\right\| \left\| \eta \right\| + \nonumber \\
					&	& + \left\|KB_{0}\right\| \left| y \right|] + |d(t)|+ c,
\label{eq: rho}
\end{eqnarray}

\noindent with $k_{min} \in (0,|k_{p}|]$ and $c > 0$ being an arbitrarily small constant, we have that $s \to 0$ in finite time and consequently $y \to K \eta$. So we can rewrite (\ref{eq: eta_ponto}) as
\begin{equation}
\dot{\eta} = A_{0} \eta + B_{0} K \eta = (A_{0} + B_{0} K) \eta.
\label{eq: eta_ponto_1}
\end{equation}

\noindent Thus, if $K$ is designed in such a way that $(A_{0} + B_{0} K)$ is Hurwitz, $\eta \to 0$ exponentially and therefore $y \to 0$.

%%%%%%%%%%%%%%%%%%%%%%%%%%%%%%%%%%%%%%%%%%%%%%%%%%%%%%%%%%%%%%%%%%%%%%%%%%%%%%%%%%%%%%%%%%%%%%%

\section{Observer Design for State and Uncertainty Estimation}\label{sec: mandrake}

The central problem considered herein is the unavailability of the state vector $\eta$ for measurement. Its dynamical evolution is specified by the system given in (\ref{eq: eta_ponto})--(\ref{eq: y_ponto}). The scenario is further complicated by the presence of unknown plant parameters and unmeasured exogenous disturbances. To enable the application of the control law formulated in (\ref{eq: controle}) and (\ref{eq: rho}), the synthesis of observer-based estimators for these uncertainties becomes imperative. Subsequent sections detail the design of these observers, whose estimates are essential for implementing the control strategy and achieving the stabilization objectives of this study.

\subsection{Estimation of Lumped Uncertainties and External Disturbances}\label{obs high gain}

This section outlines the design of two interconnected observers central to our approach. The first observer is tasked with generating an estimate of the output variable $y$. It is pertinent to note that this variable is already measurable. However, this estimated output is expressly utilized as an internal component for constructing a second observer. The purpose of this subsequent observer is to provide simultaneous estimations of the plant's parametric uncertainties and external disturbances.

Consider the following system
\begin{eqnarray}
\dot{\eta} & = & A_{0}\eta + B_{0}y \label{eq: eta_ponto_2}\\
\dot{y} & = & A_{1}\eta + B_{1}y + k_{p} [u + d(t)]. \label{eq: y_ponto_2}
\end{eqnarray}

\noindent Unlike (\ref{eq: y_ponto}), from this point we have $A_{1} = \bar{A_{1}} + \Delta _{A_{1}}$ and $B_{1} = \bar{B_{1}} + \Delta _{B_{1}}$, where the matrices $\bar{A_{1}}$ and $\bar{B_{1}}$ are known and $\Delta _{A_{1}}$ and $\Delta _{B_{1}}$ represent uncertainties around these parameters. Thus, the output (\ref{eq: y_ponto_2}) can be rewritten by

\begin{equation}
\dot{y}  =  \bar{A_{1}}\eta + \bar{B_{1}}y + k_{p} [u + \delta (\eta, y, t)]
\label{eq: novo_y_ponto}
\end{equation}

\noindent where the variable 

\begin{equation}
\delta (\eta, y, t) = d(t) + \frac{\Delta _{A_{1}} \eta + \Delta _{B_{1}} y}{k_{p}}
\label{eq: delta_n}
\end{equation}

\noindent represents the parametric uncertainties of the plant and external disturbances, being actually a lumped uncertainty.

Below, follow the Assumption \textbf{(A3)}, relating to the parametric uncertainties:

\begin{description}
	\item [\textbf{(A3)}]\textit{On the parametric uncertainties}: All of uncertain parameters in (\ref{eq: y_ponto_2}) belongs to a compact set $\Omega$.
\end{description}

For the design of the modulation function $\rho$ in (\ref{eq: rho}), it is necessary to have the measure of the external disturbances $d(t)$, which are not available, obviously. Also, in this section, uncertainties were introduced in the parameters of external dynamics of plant, introducing the term $\delta$ in place of $d$ in (\ref{eq: rho}). Thus, the modulation function $\rho$ can be rewritten as 

\begin{eqnarray}
\rho &\geq & \frac{1}{k_{min}}  [\left\|\bar{A_{1}}\right\| \left\| \eta \right\| + \left\|\bar{B_{1}}\right\| \left| y \right| + \left\|KA_{0}\right\| \left\| \eta \right\| + \nonumber \\
					&	& + \left\|KB_{0}\right\| \left| y \right|] + |\delta(\eta,y,t)|+ c,
\label{eq: rho_mudado}
\end{eqnarray}

\noindent where $\delta (\eta, y, t)$, as well as $d(t)$, must be estimated by an observer.

For estimation of $\delta$, first we design a high-gain observer \citep[Chapter]{K:2002} for the output $y$, described by

\begin{equation}
\dot{\hat{y}} = \bar{A_{1}} \hat{\eta} + \bar{B_{1}} y + k_{p} u - \frac{1}{\varepsilon} (\hat{y} - y)
\label{eq: obs_y}
\end{equation}

\noindent Subsequently, defining a auxiliary variable $l$ as

\begin{equation}
l := \frac{\hat{y}-y}{\varepsilon},
\label{eq: l}
\end{equation}

\noindent it follows that

\begin{eqnarray}
\varepsilon \; \dot{l} & = & \dot{\hat{y}} - \dot{y} \nonumber \\
											 & = & \bar{A_{1}} \hat{\eta} + \bar{B_{1}} y +
															k_{p} u - \frac{1}{\varepsilon} 
															(\hat{y}-y) \nonumber \\
											 &   & 	- \bar{A_{1}} \eta - \bar{B_{1}} y - 
															k_{p} u - k_{p} \delta (\eta , y, t) 															\nonumber \\
											 & = & \bar{A_{1}} \tilde{\eta} - k_{p} \delta 														 (\eta , y, t) - l. \label{cuzao}
\end{eqnarray}

Thus, assuming that the gain $K$, introduced in Section \ref{sec: projeto de u} was appropriately designed, we have that $\tilde{\eta} \to 0$ exponentially and, for sufficiently small $\varepsilon$, so that $\varepsilon \to 0^{+}$, then we can estimate $\delta$ as in \citep{CA:2009}:
\begin{equation}
\lim_{t\to +\infty} \delta (\eta, y, t) =  \frac{-l(t)}{k_{p}} \quad (\text{singular case}~~ \varepsilon \to 0^{+})\,,
\label{eq: delta_singular}
\end{equation}
such that, $\forall \varepsilon \neq 0$:
\begin{equation}
|k_{p}\delta (\eta, y, t)| \leq |l(t)| + \pi_0(t) + \mathcal{O}(\varepsilon)\,,
\label{eq: delta}
\end{equation}
where $\pi_0(t)$ denotes an exponentially decreasing term due to the initial conditions of the filter (\ref{cuzao}).

%An undesired characteristic of the high gain observers is the presence of the phenomenon of peaking \citep{K:2002}, which consists of the presence of a signal similar to a impulse at $t=0$ in the $\delta$ estimate (and consequently in the control input $u$ if this estimate is used in the modulation function $\rho$). However, since the output $y$ can be measured (in fact the output is only observed in order to obtain an estimate for $\delta$), we can define the initial value of $\hat{y}$ the same initial value of $y$, that is, $\hat{y}_{0} = y_{0}$. This initialization eliminates the effects of peaking on the estimation of $\delta$ \citep{SK:1991}.

A well-known limitation of high-gain observers is the occurrence of the peaking phenomenon \citep{K:2002}. This phenomenon manifests as a transient peak, similar to an impulse, at the initial time ($t=0$) in the estimate of $\delta$. Consequently, this peak is also transmitted to the control input $u$ if the estimate is employed within the modulation function $\rho$. To mitigate this issue, one can leverage the fact that the system output $y$ is measurable. By initializing the estimated state variable $\hat{y}$ with the measured value at $t=0$, that is, by enforcing $\hat{y}(0) = y(0)$, the effects of peaking on the estimation of $\delta$ are effectively eliminated \citep{SK:1991}.

%%%%%%%%%%%%%%%%%%%%%%%%%%%%%%%%%%%%%%%%%%%%%%%%%%%%%%%%%%%%

\subsection{Reduced Order Observer for the Unmeasured States of Internal Dynamics}\label{obs ordem red}

We now turn our attention to the design of the observer employed for the reconstruction of the internal state variables, which are not accessible for direct measurement. Consider the following system:
\begin{eqnarray}
\dot{\bar{x}} &=& \bar{A} \bar{x} + \bar{B} \bar{u}\\
\bar{y} &=& \bar{C} \bar{x},
\label{eq: bar_sistema}
\end{eqnarray}

\noindent with $\bar{x} = \eta$, $\bar{A} = A_{0}$, $\bar{B} \bar{u} = B_{0} y$, $\bar{y} = \bar{A_{1}} \eta = \dot{y} - \bar{B_{1}} y - k_{p} [u + \delta (\eta, y, t)]$ and $\bar{C} = \bar{A_{1}}$. According to \citep{C:1999}, we can develop an observer such that 
\begin{eqnarray}
\dot{\hat{\bar{x}}} &=& \bar{A} \hat{\bar{x}} + \bar{B} \bar{u} + L (\bar{y} - \hat{\bar{y}}) \label{eq: observador_puro} \\
\hat{\bar{y}} &=& \bar{C} \hat{\bar{x}}
\end{eqnarray}

\noindent and consequently
\begin{eqnarray}
\dot{\hat{\eta}} &=& A_{0} \hat{\eta} + B_{0} y + L (\bar{y} - \bar{A_{1}} \hat{\eta}) \label{eq: observador_1}\\
\hat{\bar{y}} &=& \bar{A_{1}} \hat{\eta}. \label{eq: observador_2}
\end{eqnarray}

Being the error of observation of the state $\eta$ given by $\tilde{\eta} = \eta - \hat{\eta}$, it follows that
\begin{eqnarray}
\dot{\tilde{\eta}} &=& \dot{\eta} - \dot{\hat{\eta}} \nonumber \\
\dot{\tilde{\eta}} &=& A_{0} \eta + B_{0} y - A_{0} \hat{\eta} - B_{0} y - L (\bar{y} - \bar{A_{1}} \hat{\eta}) \nonumber \\
\dot{\tilde{\eta}} &=& A_{0} (\eta - \hat{\eta}) - L \bar{A_{1}} (\eta - \hat{\eta})\nonumber \\
\dot{\tilde{\eta}} &=& (A_{0} - L \bar{A_{1}}) \tilde{\eta}. \label{eq: contas2}
\end{eqnarray}

Thus, if the gain $L$ is designed so that the matrix $(A_{0} - L \bar{A_{1}})$ is Hurwitz, then 
\begin{equation}
\tilde{\eta} \to 0
\label{eq: eta to 0}
\end{equation}

\noindent and consequently 

\begin{equation}
\hat{\eta} \to \eta.
\label{eq: hateta to eta}
\end{equation}

Despite its proven exponential convergence, the observer presented in (\ref{eq: observador_1})--(\ref{eq: observador_2}) cannot be implemented in its current form. This limitation arises because the signal $\bar{y} = \bar{A}_{1} \eta$ is a function of the internal state $\eta$, which is not available for measurement. To render the observer realizable, we introduce an auxiliary variable $Z$ defined as $Z := \hat{\eta} - L y$, such that 
\begin{eqnarray}
\dot{Z} &=& A_{0} \hat{\eta} + B_{0} y + L \bar{y} - L \bar{A_{1}} \hat{\eta} - L \dot{y} \nonumber \\
\dot{Z} &=& A_{0} \hat{\eta} + B_{0} y + L \dot{y} -            L \bar{B_{1}} y - L \bar{A_{1}} \hat{\eta} \nonumber \\
				& & - L \dot{y} - L k_{p} [u + \delta (\eta, y, t)]             \nonumber \\
\dot{Z} &=& A_{0} \hat{\eta} + B_{0} y - L \bar{B_{1}}             y - L \bar{A_{1}} \hat{\eta} \nonumber \\
				& & - L k_{p} [u + \delta (\eta, y, t)]. \label{eq: dot_z_9}
%\dot{Z}	&=& (A_{0} - L\bar{A_{1}}) \hat{\eta} + (B_{0} -            L\bar{B_{1}})y \nonumber \\
%				& & - L k_{p} [u + \delta (\eta, y, t)]. \nonumber \\
%\dot{Z}	&=& (A_{0} - L\bar{A_{1}}) \hat{\eta} + (B_{0} -            L\bar{B_{1}})y \nonumber \\
%				& & - L k_{p} u + L \, l. \label{eq: z_ponto_2}
\end{eqnarray}

From (\ref{eq: delta_singular}), we substitute the dependence of the term $\delta (\eta, y, t)$ in (\ref{eq: dot_z_9}) by $l$ in (\ref{eq: obs_y})--(\ref{eq: l}) such that:
\begin{eqnarray}
\dot{Z}	&=& (A_{0} - L\bar{A_{1}}) \hat{\eta} + (B_{0} -            L\bar{B_{1}})y \nonumber \\
				& & - L k_{p} u + L \, l. \label{eq: z_ponto_2}
\end{eqnarray}

In this way, a reduced order observer for the unmeasured states of internal dynamics can be implemented as
\begin{equation}
\hat{\eta} = Z + Ly.
\label{eq: observador_eta}
\end{equation}

%%%%%%%%%%%%%%%%%%%%%%%%%%%%%%%%%%%%%%%%%%%%%%%%%%%%%%%%%%%%%%%%%%%%%%%%%%%%%%%%%%%%%%%%%%%%

\subsection{Application of the Estimated Variables in the Control Law}

From the estimators previously presented, we can redesign the equations that define the control law $u$ considering the observed variables. Therefore, (\ref{eq: s}) can be rewritten as

\begin{equation}
s = y - K \hat{\eta}
\label{eq: s_hat}
\end{equation}

\noindent and consequently, considering also that the uncertainties regarding the parameters $\bar{A_{1}}$ and $\bar{B_{1}}$ and the disturbance $d(t)$ are included in term $\delta (\eta, y, t)$, it follows that

\begin{equation}
\dot{s} = \bar{A_{1}} \eta + \bar{B_{1}} y + k_{p} [u + \delta ] - K A_{0} \hat{\eta} - K B_{0} y.
\label{eq: s_ponto_hat}
\end{equation}

Thus, the modulation function $\rho$ in (\ref{eq: rho}) must be redefined as
\begin{eqnarray}
\rho &\geq & \frac{1}{k_{min}}  [\left\|\bar{A_{1}}\right\| \left\| \hat{\eta} \right\| + \left\|\bar{B_{1}}\right\| \left| y \right| + \left\|KA_{0}\right\| \left\| \hat{\eta} \right\| \nonumber \\
					&	& + \left\|KB_{0}\right\| \left| y \right| + |l|] + c,
\label{eq: rho_2}
\end{eqnarray}

\noindent with $k_{min} \in (0,|k_{p}|]$ and $c > 0$ a design constant.

%%%%%%%%%%%%%%%%%%%%%%%%%%%%%%%%%%%%%%%%%%%%%%%%%%%%%%%%%%%%%%%%%%%%%%%%%%%%%%%%%%%%%%%%%%%%%%%%%%%%%%%%%%%%%%%%%%%%%%%%%%%%%%%

\subsection{Stability Analysis}
%The following theorem presents a possible implementation of the modulation function such that (\ref{eq: rho_2}) is checked and the sliding mode controller through multiple observers and output feedback proposed stabilize the output $y$ and the states $\eta$ in finite time.

The subsequent theorem delineates a feasible design for the modulation function. This design ensures that the condition specified in (\ref{eq: rho_2}) is satisfied and, consequently, guarantees the finite-time stabilization of both the output $y$ and the state variables $\eta$ under the proposed multi-observer-based sliding mode control utilizing output feedback.

%\begin{theorem}
%Consider the normal form system described by (\ref{eq: eta_ponto_2})--(\ref{eq: y_ponto_2}), which is governed by the control law (\ref{eq: controle}) and the dynamics of the auxiliary variable $s$ given in (\ref{eq: s_ponto_hat}). The adaptive observer architecture consists of a reduced-order observer defined by (\ref{eq: z_ponto_2}) and (\ref{eq: observador_eta}), a high-gain observer (\ref{eq: obs_y}), and an estimator for parametric uncertainties and disturbances specified in (\ref{eq: l}) and (\ref{eq: delta}). Let $\varepsilon>0$ be chosen sufficiently small. Under the validity of assumptions \textbf{(A1)}, \textbf{(A2)}, and \textbf{(A3)}, and provided the modulation function $\rho$ in (\ref{eq: controle}) adheres to condition (\ref{eq: rho_2}) and is defined as
%\begin{eqnarray}
%\rho & = & \frac{1}{|k_{p}|}  [\left\|\bar{A_{1}}\right\| \left\| \hat{\eta} \right\| + \left\|\bar{B_{1}}\right\| \left| y \right| + \left\|KA_{0}\right\| \left\| \hat{\eta} \right\| \nonumber \\
%					&	& + \left\|KB_{0}\right\| \left| y \right| + |l|] + c,
%\label{eq: rho_3}
%\end{eqnarray}
%
%\noindent where $c > 0$ is an arbitrary small constant, the system is globally asymptotically stabilized. This result holds if the gain matrices $K$ (\ref{eq: eta_ponto_1}) and $L$ (\ref{eq: observador_eta}) are selected to ensure that $(A_{0} + B_{0} K)$ and $(A_{0}-L\bar{A_{1}})$ are Hurwitz, respectively. Consequently, the system output $y$ and the state vector $\eta$ converge to zero globally and asymptotically.
%\end{theorem}

\begin{theorem}
Consider the normal-form system (\ref{eq: eta_ponto_2})--(\ref{eq: y_ponto_2}) under Assumptions (A1)--(A3). Let the control law be given by (\ref{eq: controle}), where the modulation function is chosen as

\begin{eqnarray}
\rho & = & \frac{1}{|k_{p}|}  [\left\|\bar{A_{1}}\right\| \left\| \hat{\eta} \right\| + \left\|\bar{B_{1}}\right\| \left| y \right| + \left\|KA_{0}\right\| \left\| \hat{\eta} \right\| \nonumber \\
					&	& + \left\|KB_{0}\right\| \left| y \right| + |l|] + c,
\label{eq: rho_3}
\end{eqnarray}

with $c>0$. Suppose that

\begin{enumerate}
\item the gain matrix $K$ is selected such that $(A_0+B_0K)$ is Hurwitz;

\item the observer gain $L$ is selected such that $(A_0-L\bar A_1)$ is Hurwitz;

\item the observer consists of the high-gain observer (\ref{eq: obs_y}), the uncertainty estimator (\ref{eq: l})-(\ref{eq: delta}), and the reduced-order observer (\ref{eq: z_ponto_2})-(\ref{eq: observador_eta});

\item the design parameter $\varepsilon>0$ is chosen sufficiently small.
\end{enumerate}

Then, the sliding variable satisfies
\[
s(t)\rightarrow0,
\]
and the equilibrium
\[
(\eta,y)=(0,0)
\]
is globally asymptotically stable. Consequently,
\[
\eta(t)\rightarrow0,
\qquad
y(t)\rightarrow0,
\]
as $t\rightarrow\infty$.
\end{theorem}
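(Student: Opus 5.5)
The argument is a cascade: first the estimation errors converge independently of the control, then a Lyapunov reaching argument drives $s$ to zero, and finally the sliding-mode dynamics of $\eta$ are handled by the choice of $K$.

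\textbf{Step 1 (observer error).} Substituting $l$ for $-k_p\delta$ in the realizable observer (\ref{eq: z_ponto_2})--(\ref{eq: observador_eta}) perturbs the error dynamics (\ref{eq: contas2}). The error then satisfies
\begin{equation*}
\dot{\tilde\eta}=(A_0-L\bar A_1)\tilde\eta - L\,(k_p\delta + l).
\end{equation*}
From the filter (\ref{cuzao}), $k_p\delta + l = \bar A_1\tilde\eta - \varepsilon\dot l$. This makes the $(\tilde\eta,l)$ loop a singularly perturbed system. Its boundary layer $\varepsilon\dot l=-l+\dots$ is exponentially stable, and its reduced system is governed by the Hurwitz matrix $A_0-L\bar A_1$.

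Using a Lyapunov function $V_1=\tilde\eta^TP\tilde\eta+\varepsilon\, l_e^2$, I would invoke standard singular-perturbation and high-gain arguments, as in \citep{K:2002,CA:2009}. Here $l_e=l+k_p\delta$, and the derivative of $\delta$ is bounded on compact sets via (A2)--(A3). These arguments show that $\tilde\eta\to0$ exponentially up to $\mathcal O(\varepsilon)$ terms. They also show that the residual $|k_p\delta|-|l|$ is bounded by $\pi_0(t)+\mathcal O(\varepsilon)$, which is exactly (\ref{eq: delta}). The initialization $\hat y(0)=y(0)$ removes peaking, so these bounds hold uniformly from $t=0$.

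\textbf{Step 2 (reaching).} I take $V_s=s^2/2$ with $s=y-K\hat\eta$. The term $K\dot{\hat\eta}$ should be computed from the implemented observer. Combining (\ref{eq: s_ponto_hat}) with (\ref{eq: controle}) and (\ref{eq: rho_3}), and using $k_{min}=|k_p|$, I bound $s\dot s$. The nominal terms cancel against $\rho$, and the residual satisfies
\begin{equation*}
s\dot s\le -|s|\big(c|k_p| - \|\bar A_1\|\|\tilde\eta\| - \pi_0(t)-\mathcal O(\varepsilon)\big).
\end{equation*}
Because $\tilde\eta$ and $\pi_0$ decay exponentially, after some finite $T$ the bracket exceeds $c|k_p|/2$ once $\varepsilon$ is small enough relative to $c$. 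This gives $s\to0$ in finite time.

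Before that, I must exclude finite escape on $[0,T]$. The closed loop is linear in the states with a bounded-gain discontinuous input, since $\rho$ is linear in $\|\hat\eta\|,|y|,|l|$. Hence solutions grow at most exponentially, and no escape occurs.

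\textbf{Step 3 (sliding dynamics).} On $s=0$ we have $y=K\hat\eta=K\eta-K\tilde\eta$. Then
\begin{equation*}
\dot\eta=(A_0+B_0K)\eta - B_0K\tilde\eta .
\end{equation*}
This is Hurwitz by hypothesis 1, driven by an exponentially vanishing input, so ISS gives $\eta\to0$, and then $y\to0$. For global asymptotic stability I would compose Lyapunov functions for the cascade $(\tilde\eta,l_e)\to s\to\eta$. Stability in the sense of Lyapunov follows from the linear growth bounds together with the finite reaching time bound, which is proportional to $|s(0)|$.

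\textbf{Main obstacle.} The hardest step is Step 1's handling of $\mathcal O(\varepsilon)$. The residual bound on $|\delta|$ depends on $\dot\delta$, which depends on $\dot\eta$ and $\dot y$, and these involve the discontinuous $u$. So the bound is only semiglobal, or practical rather than exact. To close the argument, I would first try to show that the $\mathcal O(\varepsilon)$ term scales with the state norm, since $\delta$ is linear in $(\eta,y)$ plus $d$. That scaling lets the $c$ margin together with norm-proportional terms in $\rho$ dominate it. Exact convergence would then hold if $d$ vanishes, and only practical convergence otherwise.
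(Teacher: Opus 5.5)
The genuine gap is in your Step 1, and it cannot be closed. Your Steps 2 and 3 are essentially the paper's own argument: $V=s^2/2$, cancellation of the nominal terms by (\ref{eq: rho_3}), exponentially decaying residuals, then $y\to K\eta$. Your no-escape and ISS additions are sensible.

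Your claim that the reduced error system is governed by $A_0-L\bar A_1$ does not follow even from the identity you wrote. Inserting $k_p\delta+l=\bar A_1\tilde\eta-\varepsilon\dot l$ gives $\dot{\tilde\eta}=(A_0-2L\bar A_1)\tilde\eta+\varepsilon L\dot l$. That identity also inherits a sign slip from (\ref{cuzao}). With $\tilde\eta=\eta-\hat\eta$, (\ref{eq: obs_y}) and (\ref{eq: novo_y_ponto}) give $\varepsilon\dot l=-\bar A_1\tilde\eta-k_p\delta-l$. So the quasi-steady state of $l$ is $-k_p\delta-\bar A_1\tilde\eta$, not $-k_p\delta$, and
\begin{equation*}
\dot{\tilde\eta}=(A_0-L\bar A_1)\tilde\eta-L(k_p\delta+l)=A_0\tilde\eta+\varepsilon L\dot l .
\end{equation*}
Put differently, the measurement the realizable observer (\ref{eq: z_ponto_2})--(\ref{eq: observador_eta}) effectively uses is $\dot y-\bar B_1y-k_pu+l=\bar A_1\hat\eta-\varepsilon\dot l$, which is its own prediction. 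The estimator $l$ absorbs $\bar A_1\tilde\eta$ together with $k_p\delta$, so the innovation vanishes as $\varepsilon\to0^+$. The observer then collapses to the open-loop copy $\dot{\hat\eta}=A_0\hat\eta+B_0y$, and by (A1) $A_0$ is not Hurwitz.

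Concretely, the linear part of the $(\tilde\eta,l)$ subsystem has Schur complement $A_0$ with respect to its fast block. In the paper's own example ($A_0=\bar A_1=1$, $L=3$) its determinant is $-1/\varepsilon<0$, so it has a positive eigenvalue for every $\varepsilon>0$. This already happens in the nominal case $\Delta_{A_1}=\Delta_{B_1}=0$, $d\equiv0$, where this subsystem does not depend on $u$ at all. Hence neither $\tilde\eta\to0$ nor (\ref{eq: delta}) with a decaying $\pi_0$ can be derived; in fact $|k_p\delta+l|\approx|\bar A_1\tilde\eta|$.

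The paper's proof does not meet this obstacle only because it never analyzes the observer that is actually implemented. It takes $\tilde\eta\to0$ from (\ref{eq: contas2})--(\ref{eq: eta to 0}), which were derived for the non-realizable observer (\ref{eq: observador_1}) fed with $\bar y=\bar A_1\eta$. It also takes (\ref{eq: delta}) as given. You were right that these need justification, but once Step 1 fails, Steps 2--3 lose their inputs. The $\|\bar A_1\|\|\tilde\eta\|$ term in your reaching inequality does not decay, and the ISS argument of Step 3 is driven by a diverging $\tilde\eta$.

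Formally, in the limit $\varepsilon\to0^+$, (\ref{eq: rho_3}) still enforces sliding on $s=y-K\hat\eta=0$. This yields $\dot{\hat\eta}=(A_0+B_0K)\hat\eta$, hence $\hat\eta\to0$ and $y\to0$. Meanwhile $\eta=\hat\eta+\tilde\eta$ is left to the unstable zero dynamics $\dot\eta=A_0\eta+B_0y$, which is exactly the nonminimum-phase failure the design is meant to prevent. The $\mathcal O(\varepsilon)$/semiglobality issue you flagged is real but secondary. What is missing is a source of information on $\eta$ that is not consumed by the uncertainty estimator; without it, $\eta\to0$ cannot be reached along this route.
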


\begin{proof} The next analysis follows the standard singular perturbation argument \citep[Chapter~5]{utkin1992sliding}, \citep{sg2} commonly adopted for high-gain observer-based designs, such as  \citep[Chapter~14]{K:2002} and \citep{sg3}, where the observer parameter $\varepsilon$ is selected sufficiently small to render the approximation error arbitrarily small. Let the following Lyapunov candidate function
\begin{equation}
V = \frac{s^{2}}{2}.
\label{eq: lyap}
\end{equation}

Differentiating $V$ along the closed-loop trajectories yields
\begin{eqnarray}
\dot{V} & = & s \dot{s} \nonumber \\
				& = & s(\dot{y}-K\dot{\hat{\eta}}) \nonumber \\
				& = & s(\bar{A_{1}} \eta \!+\! \bar{B_{1}} y \!+\! k_{p}(u+\delta) \!-\! KA_{0} \hat{\eta} \!-\! KB_{0}y \nonumber \\
				&		&	- KL( \underbrace{\bar{y}}_{=\bar{A_{1}} \eta} - \bar{A_{1}} \hat{\eta})) \label{eq: lyap_2}
%        & = & s(\bar{A_{1}} \eta + \bar{B_{1}} y - KA_{0} \hat{\eta} - KB_{0}y + k_{p} \delta
%				&   & \!+\! \,  k_{p}(\!-\!\rho sgn(k_{p}) sgn(s)) \!-\!
%				{KL(\bar{A_{1}} \eta \!-\! \bar{A_{1}} \hat{\eta})}).
\end{eqnarray}

substituting (\ref{eq: controle}) into (\ref{eq: lyap_2}), we obtain:

\begin{eqnarray}
\dot{V} & = & s(\bar{A_{1}} \eta + \bar{B_{1}} y - KA_{0} \hat{\eta} - KB_{0}y + k_{p} \delta + \nonumber \\
				&   & k_{p}(\!-\!\rho \sgn(k_{p}) \sgn(s)) \!-\!
				{KL(\bar{A_{1}} \eta \!-\! \bar{A_{1}} \hat{\eta})}).
\label{eq: lyap_3}
\end{eqnarray}

From (\ref{eq: contas2}) and (\ref{eq: eta to 0}), we have that term $KL(\bar{A_{1}} \eta - \bar{A_{1}} \hat{\eta}) = KL\bar{A_{1}} \, \tilde{\eta} \to 0$ exponentially. Thus, there exists a bounded exponentially decaying function $\pi _{1} := -KL(\bar{A_{1}} \eta - \bar{A_{1}} \hat{\eta})$ as an exponential decay term.  So, using the modulation function $\rho$ given in (\ref{eq: rho_3}), it follows that

\begin{eqnarray}
%				&   & + \,  k_{p}(-\rho sgn(k_{p}) sgn(s)) - l -
%				\underbrace{KL(\bar{A_{1}} \eta - \bar{A_{1}} \hat{\eta})}_{(KL\bar{A_{1}} \tilde\eta)\to 0}) \\
%				& = & s(\bar{A_{1}} \eta + \bar{B_{1}} y - KA_{0} \hat{\eta} - KB_{0}y +\\
%				&   & +  |k_{p}| (-\rho sgn(s)) -l) \\
\dot{V}	& = & s \, \bar{A_{1}} \eta \!-\! \left| s \right| \left\| \bar{A_{1}} \right\| \left\| \hat{\eta} \right\| 
				\!+\! s \,  \bar{B_{1}} y \!-\! \, \left| s \right| \left\| \bar{B_{1}} \right\| \left| y \right| - \nonumber \\
				&   &  s \, KA_{0} \hat{\eta} - \left| s \right| \left\| KA_{0} \right\| \left\| \hat{\eta} \right\|
				- s \, KB_{0} y - \nonumber \\
				&   & \left| s \right| \! \left\| KB_{0} \right\| \! \left| y \right| \!+\!  s k_p \delta 
				\!-\! \left| s \right| \! \left| l \right| \!-\! \left| s \right| \! \left| k_{p} \right| c \!+\! s \pi _{1}.
\end{eqnarray}

Applying the triangle inequality together with the induced matrix norm, it follows that

\begin{eqnarray}
\dot{V}	& \leq & \left| s \right| \left\| \bar{A_{1}} \right\| \left\| \eta \right\| - \left| s \right| \left\| \bar{A_{1}} \right\| 
								\left\| \hat{\eta} \right\| + \nonumber \\
				&		&  \left| s \right| \left\| \bar{B_{1}} \right\| \left| y \right|	- \, \left| s \right| \left\| \bar{B_{1}} \right\| 
				        \left| y \right| + \nonumber \\
				&   &  \left| s \right| \left\| KA_{0} \right\| \left\| \hat{\eta} \right\| 
							 - \left| s \right| \left\| KA_{0} \right\| \left\| \hat{\eta} \right\| + \nonumber \\
				&	  &  \left| s \right| \left\| KB_{0} \right\| \left| y \right| -\left| s \right| \left\| KB_{0} \right\| \left| y \right| + \nonumber \\
				&   &  \left| s \right| \left| k_p \delta \right| - \left| s \right| \left| l \right| 
						  - \left| s \right| \left| k_{p} \right| c + s \pi _{1}.
\end{eqnarray}

Still, from (\ref{eq: delta}), we can establish an upper bound to the term $|k_{p} \delta|-|l| \leq \pi_0+\mathcal{O}({\varepsilon})$, being $\pi_0$ an exponential decay term. Therefore, it follows that

\begin{eqnarray}
\dot{V}	& \leq & \left| s \right| \left\| \bar{A_{1}} \right\| (\left\| \eta \right\| - \left\| \hat{\eta} \right\|) \nonumber \\
		  &   & + \left| s \right| \left[ \pi_0 + \mathcal{O}(\varepsilon) \right] 
%							- 2 \left| s \right| \left\| KB_{0} \right\| \left| y \right| \\
%				&   & - 2 \left| s \right| \, \left| l \right|
- \left| s \right| \left| k_{p} \right| c + s \pi _{1}.	
\end{eqnarray}

Furthermore, from (\ref{eq: contas2}) and (\ref{eq: eta to 0}), we have
\begin{eqnarray}
\left\|\bar{A_{1}} \right\| (\left\| \eta \right\| - \left\| \hat{\eta} \right\|) &=& \left\| \bar{A_{1}} \right\| (\left\| \tilde{\eta} + \hat{\eta} \right\|-\left\| \hat{\eta} \right\|) \\
& \leq & \left\| \bar{A_{1}} \right\| \left\| \tilde{\eta} \right\| \to 0 \text{ (exponentially)}. 
\end{eqnarray}
Thus, by defining $\pi _{2} := \left| s \right| \left\| \bar{A_{1}} \right\| (\left\| \eta \right\| - \left\| \hat{\eta} \right\|)$ also as a exponential decay term, we obtain
\begin{eqnarray}
\dot{V}	& \leq & \left| s \right| \left[ \pi _{2} + \pi_0 + \mathcal{O}(\varepsilon)  - \left| k_{p} \right| c + \sgn(s) \pi _{1} \right].	
\end{eqnarray}
Since $\pi _{0}$, $\pi _{1}$ and $\pi _{2}$ are terms of exponential decay, we can say that $\exists \, t  \leq t_{1}$, where $t_{1} > 0$ is an instant of finite time, such that $\pi _{2} + \pi_0 + \mathcal{O}(\varepsilon) + \sgn(s) \pi _{1} < \left| k_{p} \right| c $, for $\varepsilon>0$ sufficiently small. 
%
%\begin{equation}
%\pi _{2} + s \pi _{1} < \left| s \right| \left| k_{p} \right| c + 2 \left| s \right| \left\| KA_{0} \right\| \left\| \hat{\eta} \right\| +
%2 \left| s \right| \left\| KB_{0} \right\| \left| y \right| + 2 \left| s \right| \, \left| l \right|.
%\label{eq: contas_69}
%\end{equation}
%
\noindent Since $c > 0$, we conclude that
\begin{equation}
\dot{V} < 0,
\label{eq: prova}
\end{equation}
\noindent and $s \to 0$ asymptotically. From (\ref{eq: contas2}), (\ref{eq: eta to 0}) and (\ref{eq: s_hat}), consequently $y \to K \eta$. Therefore, from (\ref{eq: eta_ponto_1}), it follows that $\eta \to 0$. Thus, the equilibrium point $\eta = 0$ and $y = 0$ is global asymptotically stable. 

\end{proof}

\section{Simulation Results}

%\begin{eqnarray}
%\left [ 
%\begin{array}{c}
%	\dot{\eta}\\
%	\dot{y}\\
%\end{array} \right ] &=&
%\left [ 
%\begin{array}{rrr}
%	0 & 1 & 0 \\
% -1 & -1 & 1  \\
%	0 & 1 & 1 \\
%\end{array} 
%\right ]  \left [ 
%\begin{array}{c}
%	\eta\\
%	y \\
%\end{array} \right ]  +
%\left [ 
%\begin{array}{c}
%	0 \\
%	0 \\
%	0,5 \\
%\end{array} 
%\right ] & u \nonumber \\
%y&=& \left[ \begin{array}{ccccc}
%	0 & \, \, \, \, \, \,  0  & \, \, \, \, \, \, 1 \\
%\end{array} \right]  \left [ 
%\begin{array}{c}
%	\eta\\
%	y \\
%\end{array} \right ] . \nonumber
%\end{eqnarray}

This section details the simulation outcomes for the system described by equations (\ref{eq: eta_ponto_2})--(\ref{eq: y_ponto_2}), which is considered in its normal form. The applied control strategy integrates the control law from (\ref{eq: controle}) with the modulation function specified in (\ref{eq: rho_3}) and employs the observers developed in Sections \ref{obs high gain} and \ref{obs ordem red}.

The primary control aim, as established in Section \ref{sec: control objective}, is the global finite-time stabilization of the system states. This implies that the control signal $u$ must drive both $\eta$ and $y$ to converge to zero within a finite time horizon, starting from any initial condition.

The simulation parameters were selected as follows: $A_{0} = 1$, $B_{0} = -1$, $\bar{A_{1}} = 1$, $\bar{B_{1}} = 1$, and $k_{p} = 1$. To ensure the Hurwitz stability of the matrices $(A_{0} + B_{0} K)$ and $(A_{0} - L \bar{A_{1}})$, the gains were set to $K = 3$ and $L = 3$. Parameter uncertainties in $A_{1}$ and $B_{1}$ were modeled by defining $\Delta {A{1}} = 1$ and $\Delta {B{1}} = 0.5$, respectively. The disturbance signal $d(t)$ affecting (\ref{eq: y_ponto_2}) was chosen as a sinusoid with an amplitude of $2$ and a frequency of $10$ rad/s. The plant's initial conditions were set to $\eta_ {0} = 3$ and $y_{0} = 3$.

The system's output response is shown in Fig. \ref{fig: saida da planta}. It can be seen that, following a brief transient period, the control input $u$ successfully stabilizes the output $y$ at the origin. This confirms that the central objective of the proposed control algorithm has been achieved for the system (\ref{eq: eta_ponto_2})--(\ref{eq: y_ponto_2}).

The same figure also displays the performance of the high-gain observer formulated in (\ref{eq: obs_y}). This observer provides a satisfactory estimate of the plant's measurable output $y$. As anticipated and discussed in Section \ref{sec: mandrake}, the main purpose of this observer is not to estimate the measurable output itself, but to serve as a foundational tool for constructing an estimator of the combined parametric uncertainties and external disturbances.

Furthermore, initializing the observer state $\hat{y}$ with the same value as $y(0)$ proved effective in mitigating the peaking phenomenon. This is corroborated by the absence of extreme initial peaks in the control input signal $u$, as illustrated in Fig. \ref{fig: controle}.

A comparison between the internal state variable $\eta$ and its estimate $\hat{\eta}$ is provided in Fig. \ref{fig: eta}, with the corresponding estimation error $\tilde{\eta}$ shown in Fig.~\ref{fig: etatil}. It is noteworthy that the observer was initialized with a value different from the actual state (which is unmeasurable), yet it demonstrated satisfactory convergence and estimation performance. The control input $u$ also effectively stabilizes the state $\eta$ after the initial transient.

Finally, the estimation of the aggregated uncertainty $\delta$ is compared to the disturbance input $d$ in Fig. \ref{fig: delta}. After the transient phase, the estimate $\hat{\delta}$ closely tracks the disturbance signal $d$. This behavior is consistent with the theoretical formulation in (\ref{eq: delta_n}): since the parametric uncertainty term $(\Delta {A{1}} \eta + \Delta {B{1}} y)/k_{p}$ vanishes as $\eta$ and $y$ converge to zero, the total uncertainty $\delta$ asymptotically approaches $d$.

It is worth observing that stabilization is achieved despite the simultaneous presence of plant uncertainties and external disturbances, indicating that the observer-controller interaction remains effective throughout the transient response.

\begin{figure}
\centering
\subfloat[Output $y$ and its estimate $\hat{y}$]{%
\label{fig: saida da planta}
\resizebox*{6.5cm}{!}{\includegraphics[width=\columnwidth]{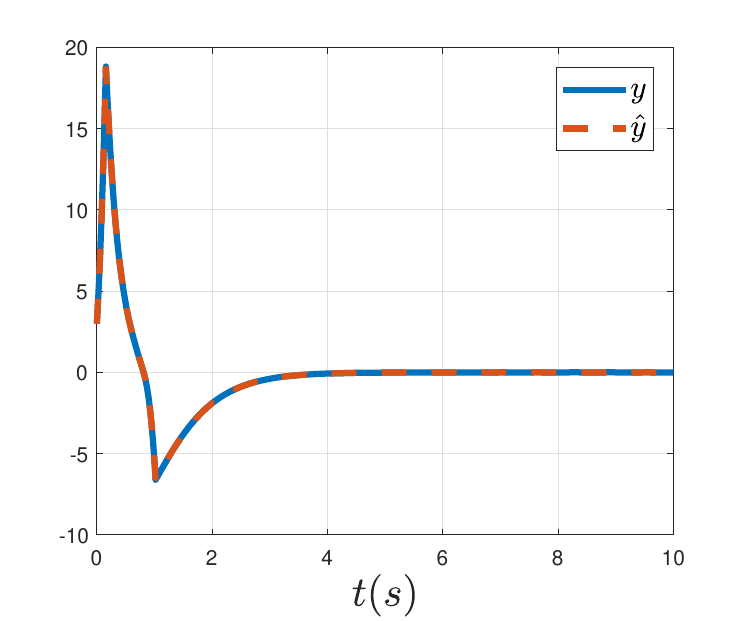}}}\hspace{5pt}
\subfloat[Control input $u$.]{%
\label{fig: controle}
\resizebox*{7.1cm}{!}{\includegraphics{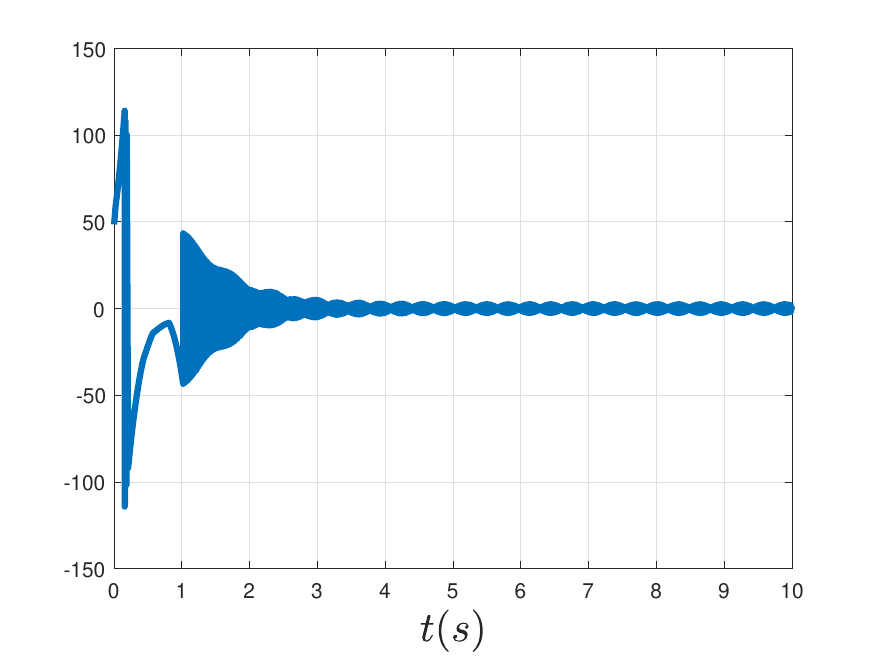}}}
\caption{(a) System output response $y$ and its estimate $\hat{y}$ from the high-gain observer. After a brief transient, the control input successfully drives the measurable output $y$ to zero, achieving the primary control objective. (b) The signal $u$ is free of extreme initial peaks, demonstrating that initializing the observer $\hat{y}$ with the same value as $y(0)$ effectively mitigated the peaking phenomenon.} 
\end{figure}

\begin{figure}
\centering
\subfloat[State $\eta$ and its estimate $\hat{\eta}$.]{%
\label{fig: eta}
\resizebox*{6.5cm}{!}{\includegraphics[width=\columnwidth]{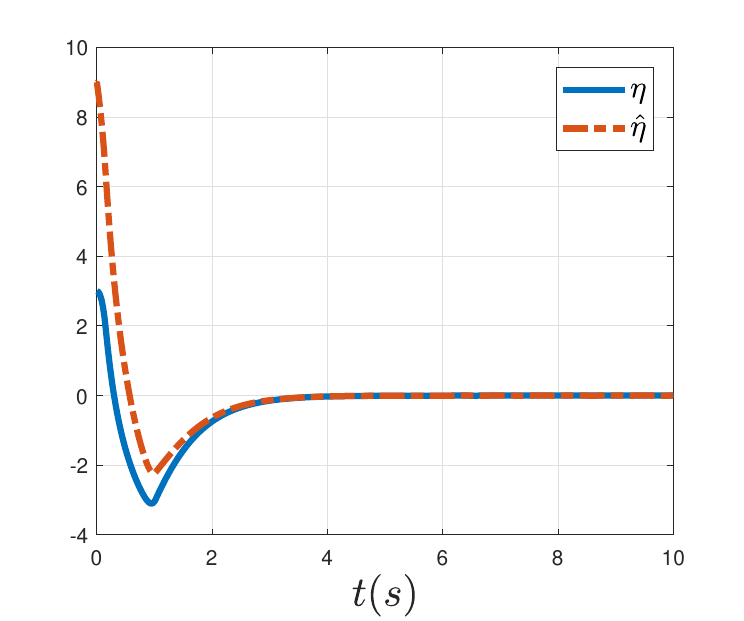}}}\hspace{5pt}
\subfloat[Estimation error $\tilde{\eta}$.]{%
\label{fig: etatil}
\resizebox*{7.1cm}{!}{\includegraphics{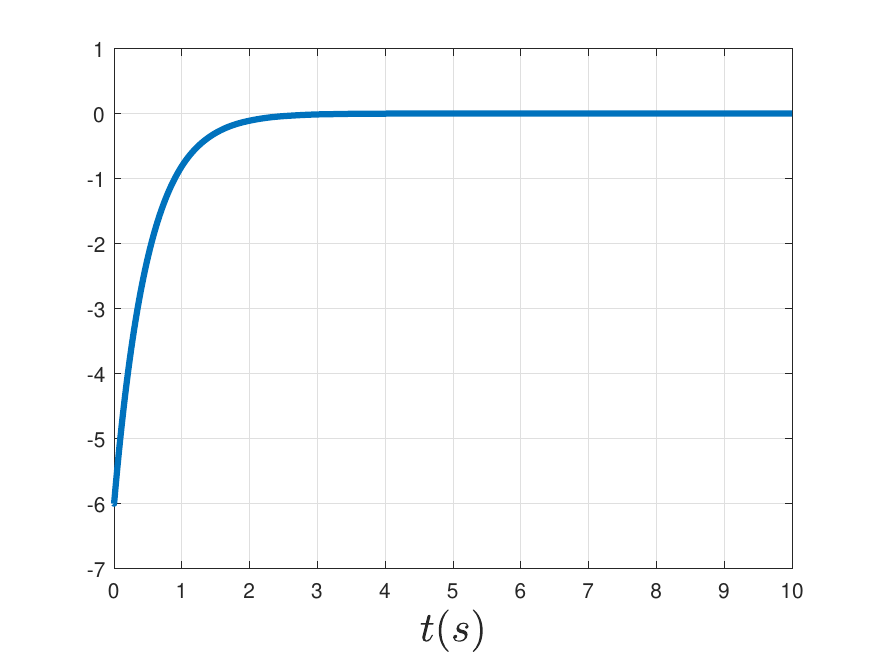}}}
\caption{(a) Internal state $\eta$ and its estimate $\hat{\eta}$ from the reduced-order observer. Despite being initialized with a value different from the actual (unmeasurable) state, the observer demonstrates satisfactory convergence and performance. (b) Estimation error $\tilde{\eta} = \eta - \hat{\eta}$ of the internal state. The error converges satisfactorily, confirming the effective performance of the state observer.} 
\end{figure}

\begin{figure}[ht]
\centering
%\vspace{-1.5cm}
%\hspace{-1cm}
\includegraphics[scale=0.5]{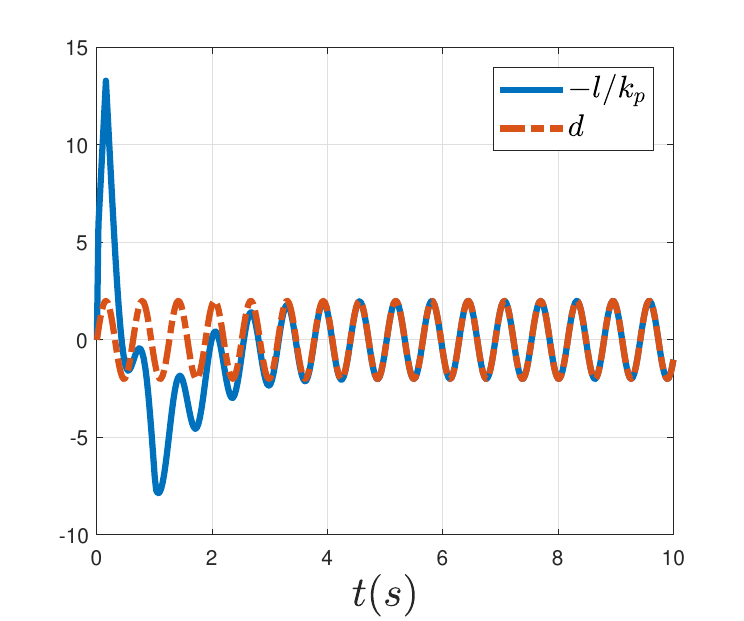}
\caption{Aggregated uncertainty $\delta$ (comprising parametric uncertainties and external disturbance $d(t)$) and its estimate $\hat{\delta}$. After the transient phase, $\hat{\delta}$ closely tracks the disturbance signal $d$, as the parametric uncertainty term vanishes with the convergence of $\eta$ and $y$ to zero.}
\label{fig: delta}
\end{figure}

\section{Conclusion}
This paper has presented a novel output-feedback control strategy for the global stabilization of uncertain nonminimum phase systems. The core contribution is a multi-observer framework that effectively circumvents the fundamental obstacle posed by unstable zero dynamics and the unavailability of the full state vector for measurement. The proposed solution decomposes the problem into manageable estimation tasks: a reduced-order observer reliably reconstructs the internal state, while a dedicated estimation mechanism accurately identifies the aggregated parametric and disturbance uncertainties. By feeding these estimates into a sliding mode control law with a carefully designed modulation function, global asymptotic stability is guaranteed.

Numerical simulations on an academic example validated the theoretical developments. The results demonstrated excellent performance: the output and internal states were successfully driven to zero; the observers exhibited rapid and accurate convergence, even when initialized with errors; the peaking phenomenon was effectively mitigated; and the estimate of the aggregated uncertainty converged to the true disturbance signal as the states approached the origin.

The proposed architecture is intentionally modular, allowing different observer designs to be incorporated without modifying the overall controller structure. Consequently, improvements in disturbance estimation or output differentiation techniques may be readily integrated into the framework while preserving the basic control philosophy.

This work opens promising avenues for future research. Immediate extensions include generalizing the approach for systems with higher relative degrees and extending the stability analysis to provide formal guarantees for the transient performance and ultimate bounds in the presence of the singular perturbation argument. Furthermore, experimental validation on physical NMP systems would be a valuable step towards practical application. The present work focuses on systems of relative degree one and assumes the availability of the normal-form representation. Extension to higher relative-degree systems constitutes an important direction for future work.

%\section*{Acknowledgments}
%
%This work is dedicated to the memory of Professor Shankar Prashad Bhattacharyya from Texas A\&M University, who visited UFRN several times during an American student exchange program. We are deeply grateful for his generosity in sharing his expertise through insightful lectures and courses, as well as for the invaluable technical discussions and warm, informal conversations that fostered a true sense of international camaraderie. Beyond his academic contributions, we were uniquely enriched by his artistic talent, as he generously shared the beautiful sounds of Indian classical music through his captivating performances on the sarod. Professor Bhattacharyya was a remarkable bridge between cultures, and he will be remembered with great respect and affection.
%

\section*{Declarations}
	
\noindent\textbf{Conflict of interest/Competing interests:}	The authors declare that they have no known competing financial interests or personal relationships that could have appeared to influence the work reported in this paper.

\noindent\textbf{Funding:}	This study was financed in part by the Coordenação de Aperfeiçoamento de Pessoal de Nível Superior - Brasil (CAPES) - Finance Code 001.
	
\noindent\textbf{Author Contributions:} Author contributions are as follows: R.S. conceived the study, developed the methodology, conducted the formal analysis and investigation, and wrote the original draft. K.Q., S.D. and T.R.O., contributed to the methodology, reviewed and edited the manuscript. R.S. developed the software, created the visualizations, and reviewed and edited the manuscript. All authors discussed the results, contributed to the final manuscript, and approved the final version.
\bibliographystyle{spbasic}      
\bibliography{myinteractapasample}

\end{document}